\documentclass[11 pt]{article}
\usepackage[utf8]{inputenc}

\usepackage[dvipsnames]{xcolor}
\usepackage{amsfonts}
\usepackage{latexsym}
\usepackage{ amssymb, bm, amsmath }
\usepackage{graphicx}
\usepackage{float}
\usepackage{multirow}
\usepackage{a4wide}
\usepackage{enumerate, todonotes}
\usepackage{url}
\usepackage{cite}
\usepackage{soul}

\usepackage[]{mdframed}
\tikzset{XOR/.style={draw,circle,append after command={
        [shorten >=\pgflinewidth, shorten <=\pgflinewidth,]
        (\tikzlastnode.north) edge (\tikzlastnode.south)
        (\tikzlastnode.east) edge (\tikzlastnode.west)
        }
    }
}

\definecolor{lightblue}{rgb}{0.68,0.85,0.9}

\usepackage{array}
\usepackage{makecell}
\setcellgapes{5pt}

\usepackage{tikz}
\usetikzlibrary{shapes, backgrounds}

\pgfdeclarelayer{edgelayer}
\pgfdeclarelayer{nodelayer}
\pgfsetlayers{background,edgelayer,nodelayer,main}

\tikzstyle{Empty}=[fill=white, draw=black, shape=circle]
\tikzstyle{Gray}=[fill={rgb,255: red,128; green,128; blue,128}, draw=black, shape=circle]
\tikzstyle{Red}=[fill=red, draw=red, shape=circle]
\tikzstyle{Square}=[fill=white, draw=black, shape=rectangle]
\tikzstyle{Blue}=[fill=cyan, draw=black, shape=circle]
\tikzstyle{BlueSquare}=[fill=cyan, draw=black, shape=rectangle]
\tikzstyle{RedSquare}=[fill=red, draw=black, shape=rectangle]
\tikzstyle{Green}=[fill=green, draw=black, shape=circle]
\tikzstyle{GreenSquare}=[fill=green, draw=black, shape=rectangle]
\tikzstyle{GraySquare}=[fill={rgb,255: red,128; green,128; blue,128}, draw=black, shape=rectangle]

\tikzstyle{Normal}=[-, draw=black, fill=none]
\tikzstyle{Arrow}=[->]
\tikzstyle{Arrow2}=[<-]

\newtheorem{theorem}{Theorem}[section]

\newtheorem{corollary}[theorem]{Corollary}

\newtheorem{problem}[theorem]{Problem}

\newcommand{\qed}{\hfill \ensuremath{\Box}}

\newenvironment{proof}{
\vspace*{-\parskip}\noindent\textit{Proof.}}{$\qed$

\medskip
}

\title{On identifying codes on oriented graphs}
\author{{\sc Soura Sena Das}$\,^{a}$ {\sc Sagnik Sen}$\,^{b}$ \\
\mbox{}\\
{\small $(a)$ Indian Statistical Institute, Kolkata, India}\\
{\small $(b)$ Indian Institute of Technology Dharwad, India}}

\date{}

\begin{document}

\maketitle

\begin{abstract}

    This article studies identifying codes in oriented graphs from a computational complexity perspective. We investigate the $\mathcal{F}$-Id Code problem, where given a simple graph $G$ and a vertex subset $C$, which induces a subgraph in the family $\mathcal{F}$, as inputs
    and ask whether it is possible to orient $G$ in such a way that $C$ becomes its oriented identifying code. 
    Focusing on the family $\mathcal{F}_d$ of $d$-regular graphs, we establish a complete dichotomy by proving that the problem is polynomial-time solvable for $d\leq1$ and NP-complete for all $d\geq2$. 
    
\medskip

\noindent \textbf{Keywords:} Oriented graphs, Identifying codes, Oriented identifying codes, Regular graphs, Complexity dichotomy.

\end{abstract}

\section{Introduction}\label{sec introduction}

First introduced in $1998$~\cite{kar-chak-lev}, an \textit{identifying code} of an undirected graph $G$ is a vertex subset $C\subseteq V(G)$ such that, for every $v\in V(G)$, the set of codewords adjacent to $v$ (including itself) is non-empty and distinct for every vertex. Identifying codes have been extensively studied in simple graphs~\cite{bousquet2015identifying,CHARON20032109,foucaud2013identifying} (also see~\cite{hudry2024honkala} for an updated survey and~\cite{LDbiblio} for an online bibliography containing over 500 related articles), motivated in particular by applications such as fault detection in multiprocessor systems. In contrast, relatively little is known about identifying codes in directed graphs, particularly oriented graphs. Cohen and Havet~\cite{cohen2018minimum} initiated the systematic study of identifying codes in oriented graphs and investigated several of their structural and algorithmic properties. In this article, we focus on the computational complexity of deciding whether a given vertex subset $C$ of an undirected graph $G$ is an oriented identifying code under some orientation of $G$, motivated by a problem posed by Cohen and Havet~\cite{cohen2018minimum}.

An \textit{oriented graph} $\overrightarrow{G}$ is a directed graph without any directed cycle of length $1$ or $2$. 
Given an oriented graph 
$\overrightarrow{G}$, its set of vertices and arcs are denoted by $V(\overrightarrow{G})$ and $A(\overrightarrow{G})$, respectively. 
Moreover, $G$ denotes 
the \textit{underlying graph} of $\overrightarrow{G}$, obtained by replacing all arcs of $\overrightarrow{G}$ by edges. On the other hand, $\overrightarrow{G}$ is an \textit{orientation} of $G$. 
Given an arc $uv$, $u$ is an \textit{in-neighbor} of $v$ and $v$ is an \textit{out-neighbor} of $u$. 
The set of all in-neighbors 
(resp., out-neighbors) of $u$ is denoted by 
$N^-(u)$ (resp., $N^+(u)$). 
Furthermore, 
the \textit{closed in-neighborhood} 
(resp., \textit{closed out-neighborhood}) of $u$ is given by 
$N^-[u] = N^-(u) \cup \{u\} \text{ (resp., } 
N^+[u] = N^+(u) \cup \{u\}\text{)}$.
Given a graph $G$ (resp., oriented graph $\overrightarrow{G}$) and any vertex subset $S$, we denote the subgraph induced by $S$ by $G[S]$ (resp., $\overrightarrow{G}[S]$).

Given an oriented graph $\overrightarrow{G}$,
a vertex subset $C \subseteq V(\overrightarrow{G})$ is an \textit{identifying code} of $\overrightarrow{G}$ if the identifier
$I(u) = N^+[u] \cap C$ is non-empty and distinct for every vertex $u \in V(\overrightarrow{G})$.
Given an undirected graph $G$, 
a vertex subset $C \subseteq V(G)$ is an \textit{oriented identifying code} of $G$ 
if there exists an orientation $\overrightarrow{G}$ of $G$ 
such that $C$ is an identifying code of $\overrightarrow{G}$.
A non-empty vertex set $C\subseteq V(G)$ is called a \textit{code} and its vertices \textit{codewords} while the vertices of $V(G)\setminus C$ are called \textit{non-codewords}.  The complexity problem of our interest is the following.

\medskip

\begin{mdframed}
\textbf{\textsc{$\mathcal{F}$-Id-Code}} (where $\mathcal{F}$ is a family of simple graphs)\\
\\
\textbf{Instance:} A graph $G$ and a set of vertices $C \subseteq V(G)$ satisfying $G[C] \in \mathcal{F}$. \\
\textbf{Question:} Is $C$ an oriented identifying code of $G$?
\end{mdframed}

Cohen and Havet~\cite{cohen2018minimum} 
showed that the 
\textsc{$\mathcal{F}$-Id Code} is NP-complete when $\mathcal{F}$, denotes the family of all simple graphs, and is polynomial-time solvable  when $\mathcal{F}$ is a family of graphs in which all but finitely many graphs $H$ satisfy
$|E(H)|=O(\log_2(\phi(|V(H)|)))$, where $\phi$ is a fixed polynomial. 
Thus, while the problem is tractable for broad classes of very sparse graphs, its complexity for dense graph families remain largely unexplored. In particular, Cohen and Havet~\cite{cohen2018minimum} asked the complexity of the $\mathcal{K}$-\textsc{Id Code} problem, where $\mathcal{K}$ is the family of all complete graphs. This naturally leads to the following general question.

\begin{problem}
Let $\mathcal{F}$ be a family of simple graphs. Then what is the computational complexity of the $\mathcal{F}$-\textsc{Id Code} problem?
\end{problem}

Motivated by this question, we study the complexity dichotomy of the $\mathcal{F}$-\textsc{Id Code} problem for $d$-regular graph families, where $\mathcal{F}$ varies over the family of $d$-regular graphs for all $d\geq0$.
Following the result of Cohen and Havet~\cite{cohen2018minimum}, we investigate how the tractability of the problem changes as the density of $G[C]$ increases. Although the density of the families of  $d$-regular graphs increase linearly with the degree, the problem becomes intractable already for $d\geq2$. That means, while the problem remains tractable when 
$G[C]$ is an independent set, or a collection of disjoint edges, it becomes intractable
when $G[C]$ is a disjoint union of cycles ($24$-cycles in our proof). 
A complete dichotomy of the problem \textsc{$\mathcal{F}_d$-Id Code}, where $\mathcal{F}_d$ denotes the family of $d$-regular graphs for all $d \geq 0$, is obtained through the following result.

\begin{theorem}\label{thm main}
Let $\mathcal{F}_d$ be the family of all $d$-regular simple graphs, for $d \geq 0$. Then the \textsc{$\mathcal{F}_d$-Id Code} problem is polynomial-time solvable if $d \leq 1$ and is NP-complete if $d \geq 2$.
\end{theorem}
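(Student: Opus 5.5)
The proof has three parts: membership in NP, polynomial algorithms for $d\le 1$, and NP-hardness for $d\ge 2$.

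\textbf{Membership in NP.} The certificate is an orientation of $G$. We check in polynomial time that every identifier $I(u)=N^+[u]\cap C$ is non-empty and that all identifiers are pairwise distinct.

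\textbf{The case $d=0$.} Here $C$ is an independent set, so $|E(G[C])|=0$. This already falls under the Cohen--Havet polynomial case. I would also give a direct argument:
\begin{itemize}
\item A codeword $c$ has $I(c)=\{c\}$, because it has no out-neighbour in $C$.
\item A non-codeword $u$ must send at least one arc into $C$, or $I(u)$ is empty.
\item Every non-codeword with exactly one out-neighbour $c$ in $C$ clashes with $c$ itself. So every non-codeword needs at least two out-neighbours in $C$.
\item Distinctness among non-codewords then becomes an assignment problem. Each non-codeword $u$ must pick a set $S_u$ of at least two neighbours in $C$, and the sets $S_u$ must be pairwise distinct.
\end{itemize}
Arcs among non-codewords are irrelevant, and arcs $c\to u$ from codewords cost nothing. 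Since a superset of a valid pair is never needed, it suffices to choose distinct pairs of neighbours, or distinct subsets in general. This is a bipartite matching between the non-codewords and the $2$-subsets of their $C$-neighbourhoods. The number of such subsets is polynomial, so the matching is computable in polynomial time. The care needed is to argue that restricting to $2$-subsets loses nothing. If a valid orientation exists, each $u$ can shrink its set to some pair, but two vertices could shrink to the same pair. Hall's condition should handle this: a vertex whose set has size at least $3$ has at least three pairs available. If that fails, I would fall back on matching into all subsets of size $2$ or $3$.

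\textbf{The case $d=1$.} Now $G[C]$ is a perfect matching on $C$. For each matching edge $ab$, exactly one of $a,b$ has identifier $\{a,b\}$ and the other has a singleton identifier. So the codeword identifiers are fixed up to the choice of which endpoint points, and they are always two distinct sets. A non-codeword must avoid both identifiers of its edge. I would again reduce to a bipartite matching or flow problem over candidate subsets of size at most $3$ in $N(u)\cap C$, with the matching edges encoded as gadgets that forbid the used sets. Alternatively, since $|E(G[C])|=|C|/2$, I would check whether a direct argument in the spirit of Cohen--Havet still works. It does not apply verbatim, because their bound is only $O(\log|V|)$ edges. So I expect a dedicated matching argument: non-codewords with at least two $C$-neighbours can always choose a pair that is not a matching edge, or the pair $\{a,b\}$ if the orientation is chosen so that $\{a,b\}$ is not used by a codeword.

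\textbf{NP-hardness for $d\ge 2$.} I would reduce from a SAT variant such as monotone NAE-3SAT or $(3,B2)$-SAT. This is the main obstacle.
\begin{itemize}
\item For $d=2$, each variable gets a cycle gadget; the paper suggests 24-cycles. Going around a cycle of codewords, the orientation has two consistent "phases", and these encode true and false.
\item Non-codeword vertices are attached to the cycles so that each vertex's identifier collides with a codeword's identifier unless the orientation is in the correct phase.
\item Clause vertices adjacent to three literal gadgets can be identified only if at least one literal is satisfied.
\end{itemize}
The hard part is verifying that the local constraints force the phase behaviour and that no unintended identifier collisions occur across gadgets. The cycle length 24 presumably gives enough room to separate the attachment points.

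\textbf{Lifting to $d\ge 3$.} I would replace each cycle by a $d$-regular gadget, for example a cycle blown up with $K_{d-1}$-like pieces, or take products with small $d$-regular components whose orientations are forced. An alternative is to add disjoint padding components that are $d$-regular and always identifiable. If the padding has enough vertices and arcs, every extra codeword can be given a unique identifier internally. These padding components must not interfere with the main gadgets.
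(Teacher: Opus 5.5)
Your NP-membership check is fine, and for $d=0$ the appeal to Cohen--Havet is what the paper does. The other three parts are not proved, and the concrete steps you propose would fail.

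\textbf{The polynomial side ($d\le 1$).} For $d=0$ and $d=1$ you restrict non-codewords to candidate identifiers of size $2$, or at most $3$. This loses solutions. Take $11$ non-codewords sharing the same four codeword neighbours. For $d=0$ they must use all $11$ subsets of size at least $2$, including the $4$-set. For $d=1$, make the four codewords two matching edges; the $11$ sets not consumed by codewords are then all needed, again including the $4$-set. More generally, about $\log_2|V(G)|$ shared neighbours can force sets of that size. So no constant size bound and no Hall-type shrinking argument works.

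Your suggestion that a non-codeword might use $\{a,b\}$ contradicts your own correct observation that $\{a,b\}$ is always the identifier of one endpoint. The claim that a non-codeword with two $C$-neighbours ``can always choose'' a suitable pair ignores distinctness among non-codewords. It also ignores non-codewords with a single $C$-neighbour.

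You are missing two ideas for $d=1$.
\begin{enumerate}
\item[(i)] A non-codeword with more than $|V(G)|$ admissible subsets of $N(v)\cap C$ can always be served greedily at the end. Such vertices may therefore be deleted, and every remaining vertex has polynomially many admissible subsets that can be listed explicitly.
\item[(ii)] For each matching edge $a_ib_i$, the set $\{a_i,b_i\}$ and one of $\{a_i\},\{b_i\}$ are always consumed by codewords. The other singleton can go to at most one non-codeword adjacent to $a_i$ or $b_i$, and the orientation of $a_ib_i$ is chosen afterwards to match.
\end{enumerate}
The paper encodes (ii) with one extra vertex $z_i$ per matching edge in the bipartite auxiliary graph, and asks for a matching saturating the remaining non-codewords.

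\textbf{The hardness side ($d\ge 2$).} Here you give only an outline and explicitly leave the forcing unverified. ``Identifiers collide unless the orientation is in the correct phase'' is not a mechanism.

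In the paper the mechanism is counting. On each $24$-cycle, one non-codeword is attached to every edge and a second to every other edge. These $36$ non-codewords must use identifiers from the $48$ singletons and consecutive pairs. So at least $12$ cycle vertices need a $3$-element identifier, which makes them sources. An oriented cycle has equally many sources and sinks, so there are exactly $12$ of each, alternating. This gives the two phases and uses up all $48$ small sets.

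Each clause gets $494$ false twins adjacent to three blocks of three consecutive cycle vertices; bounded occurrence in $(3,4)$-SAT guarantees enough disjoint odd and even blocks. Of the $511$ non-empty subsets of these $9$ vertices, $15$ are already used. A block's triple is used exactly when its centre is a source. Hence the twins can be identified if and only if some centre is a sink, i.e.\ some literal is true.

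For $d\ge 3$, disjoint $d$-regular padding cannot work, because the original cycle vertices would still have degree $2$ in $G[C]$. You must raise the degree of every existing codeword without changing its identifier. The paper pairs up $C$ and attaches $d-2$ agents to each pair. An agent is a $K_{d+1}-e$ whose non-adjacent pivots $y_1,y_2$ are joined to the two paired codewords; let $B$ be its $d-1$ basic vertices. It also has $3\cdot 2^{d-1}-1$ non-codewords, which must exhaust all non-empty subsets of $\{y_1\}\cup B$ and $\{y_2\}\cup B$. This forces every basic vertex to point to both pivots, and each pivot to point into $C$, so identifiers in $C$ are unchanged. Your blow-up alternative would mean redoing the whole $d=2$ analysis, with no indication of how the forcing would work.
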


As a consequence, we obtain NP-completeness for
the \textsc{$\mathcal{F}$-Id Code} problem 
where $\mathcal{F}$ is one of the following graph families: bipartite graphs, outerplanar graphs, partial $2$-tree graphs, planar graphs, and graphs having maximum degree at most $\Delta$ for every $\Delta \geq 2$.


\section{Proof of Theorem~\ref{thm main}}

The case $d=0$ is already understood. Observe that, $\mathcal{F}_0$ is the family of edgeless graphs and the complexity of \textsc{$\mathcal{F}_0$-Id Code} is already known to be polynomial-time solvable~\cite{cohen2018minimum}.

Next, we consider the case $d=1$, where the problem remains polynomial-time solvable. The proof follows the technique used by Cohen and Havet~\cite{cohen2018minimum}.

\begin{theorem}\label{thm poly}
Let $\mathcal{F}_1$ be the family of all $1$-regular graphs. Then the \textsc{$\mathcal{F}_1$-Id Code} problem is polynomial-time solvable.
\end{theorem}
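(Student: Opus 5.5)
The plan is to reduce the problem to a bipartite matching problem. This works because, when $G[C]$ is a perfect matching on $C$, the constraints imposed by an orientation separate into almost independent local choices.

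\textbf{Step 1 (structure).} Every codeword $a$ has exactly one neighbour $b$ in $C$, its partner. Hence $I(a)=\{a,b\}$ if $a\to b$, and $I(a)=\{a\}$ otherwise. So every pair $\{a,b\}$ of $G[C]$ always produces the identifier $\{a,b\}$ and exactly one singleton: $\{b\}$ if $a\to b$, and $\{a\}$ if $b\to a$. Identifiers of codewords are therefore automatically non-empty and pairwise distinct.

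\textbf{Step 2 (the only remaining constraints concern non-codewords).} For a non-codeword $v$, set $N_C(v)=N(v)\cap C$. Arcs between two non-codewords affect no identifier, so they may be oriented arbitrarily. An arc between $v$ and a codeword $c$ affects only $I(v)$, since $v\notin C$. Hence $I(v)=S_v$ can be any non-empty subset of $N_C(v)$, chosen independently of the other vertices. The requirements are the following.
\begin{itemize}
\item The sets $S_v$ must be pairwise distinct.
\item No $S_v$ may equal a pair $\{a,b\}$ of $G[C]$.
\item For each pair $\{a,b\}$, at most one of $\{a\}$ and $\{b\}$ may be used by non-codewords. If $\{a\}$ is used, we orient $b\to a$, so that the codeword singleton is $\{b\}$.
\end{itemize}
So $C$ is an oriented identifying code if and only if we can assign to each non-codeword $v$ a distinct \emph{resource}. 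The resources available to $v$ are of two kinds:
\begin{itemize}
\item a non-singleton subset of $N_C(v)$ that is not an edge of $G[C]$;
\item a pair-node $P_{ab}$, available whenever $N_C(v)\cap\{a,b\}\neq\emptyset$. The node $P_{ab}$ stands for the two singletons $\{a\},\{b\}$ and has capacity one.
\end{itemize}
A non-codeword with $N_C(v)=\emptyset$ is an immediate NO instance.

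\textbf{Step 3 (polynomial size).} Let $n=|V(G)\setminus C|$ and $k_v=|N_C(v)|$. The number $r_v$ of resources available to $v$ is at least $2^{k_v}-1-k_v-\lfloor k_v/2\rfloor$, and it can be computed exactly from $k_v$ and the number of pairs inside $N_C(v)$. Call $v$ \emph{rich} if $r_v\ge n$, and \emph{poor} otherwise. A poor vertex has $k_v\le \log_2 n+O(1)$, so its resources can be listed explicitly in time polynomial in $n$.

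Build the bipartite graph $B$ between poor non-codewords and their listed resources. Test, by a maximum matching algorithm, whether $B$ has a matching saturating all poor vertices. Then handle the rich vertices one at a time in any order: each rich vertex has at least $n$ resources, and fewer than $n$ of them are already taken, so it always gets a free one. Conversely, any valid orientation yields such a matching on the poor vertices. Hence the answer is YES exactly when $B$ has a matching saturating the poor vertices. Finally, translate the matching back into an orientation: arcs $v\to c$ for $c\in S_v$, all other arcs at $v$ pointing into $v$, and each pair of $G[C]$ oriented according to which singleton, if any, was used.

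\textbf{Main obstacle.} The delicate point is Step 2: correctly encoding the coupling between the singletons $\{a\}$ and $\{b\}$ of a single pair as one capacity-one resource, and checking that the pair sets $\{a,b\}$ are always forbidden. The second care point is the counting bound for rich vertices, which must really guarantee that the greedy step never fails. This is the same device Cohen and Havet use, with the threshold chosen so that the number of listed resources stays polynomial.
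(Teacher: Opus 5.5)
Your proposal is correct and is essentially the paper's proof: your non-singleton, non-edge subsets of $N_C(v)$ are the paper's sets $P(v)$, and your capacity-one pair-nodes $P_{ab}$ are its extra vertices $z_i$. The rich/poor split, the matching saturating the poor vertices and the greedy completion are the same argument, and your converse direction, which the paper leaves implicit, is also sound.

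Two slips to fix. First, if a non-codeword uses $\{a\}$ you must orient $a\to b$, not $b\to a$, so that the codeword singleton is $\{b\}$, as your own Step~1 shows. Second, ``all other arcs at $v$ pointing into $v$'' should apply only to arcs between $v$ and $C$, since arcs between two non-codewords can be oriented arbitrarily.
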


\begin{proof}
    Suppose in an instance of the problem \textsc{$\mathcal{F}$-Id Code} the input is the graph $G$ and its vertex subset $C \subseteq V(G)$. Since $G[C] \in \mathcal{F}_1$, that is, $G[C]$ is an $1$-regular graph, without loss of generality we may assume that $G[C]$ is a disjoint union of $k$ $K_2$s, where $K_2$ denotes the complete graph on $2$ vertices.

    Furthermore assume that the $i^{th}$ $K_2$ is presented by the edge $e_i = a_ib_i$ for all $i \in \{1,2, \ldots, k\}$.
    Notice that, there are two orientation choices for each such edge. In particular, if the edge $e_i$ is oriented from $a_i$ to $b_i$, then the identifiers of its endpoints will be $I(a_i) = \{a_i, b_i\}$ and $I(b_i) = \{b_i\}$. Similarly, if the edge $e_i$ is oriented from $b_i$ to $a_i$, then the identifiers of its endpoints will be $I(a_i) = \{a_i\}$ and $I(b_i) = \{a_i, b_i\}$. Observe that, the identifiers on the vertices $a_i, b_i$ 
    must contain the set $\{a_i, b_i\}$ irrespective of the orientation. Moreover, $\{a_i\}$ or $\{b_i\}$ is the other identifier used (based on the orientation of $e_i$).

    Let $v \in V(G) \setminus C$ be any non-codeword. Let $I^*(v) = N(v) \cap C$, that is the set of neighbors of $v$ in $C$. Notice that, given any subset $S \subseteq I^*(v)$ it is possible to orient $G$ in a way 
    that the out-neighbors of $v$ in $C$ are exactly the vertices from $S$, and thus, the identifier of $v$ becomes $I(v) = S$. Moreover, it is possible to choose the orientations of the edges between $v$ and $C$ independently from the choice of orientations of the edges between any other vertex $u \in V(G) \setminus C$ and $C$. 

    Thus, if it is possible to greedily assign a subset $S$ of $I^*(v)$ to each vertex $v$ in such a way that $S$ has not been assigned to any of the vertices of $V(G) \setminus C$ earlier, and also it doesn't feature as an identifier of a vertex from $C$, then we are done. Note that, while assigning the set $S$ to $v$, if we mandate $S \neq \{a_i, b_i\}, \{a_i\}, \{b_i\}$ for all $i \in \{1, 2, \ldots, k\}$, then we know that $S$ does not feature as an identifier of a vertex from $C$. In particular, we know that, $\{a_i, b_i\}$  surely features as an identifier of a vertex in $C$. 

    Let 
    $$IC = \{\{a_i, b_i\}, \{a_i\}, \{b_i\} : i \in \{1, 2, \ldots, k\}\}
    \text{ and } P(v) = \{S : S \subseteq I^*(v) \text{ and } S \neq \emptyset\} \setminus IC$$
    for all $v \in V(G) \setminus C$. Notice that if $|P(v)| > |V(G)|$ for some $v$, it is always possible to choose a suitable $S \in P(v)$ for $v$ in a greedy algorithm. Thus, we need not think much about such vertices. 
    Irrespective of the size of $P(v)$, it is possible to calculate its cardinality in $O(|V(G)|)$ time. The way we calculate it is the following.
    First note that the set $I^*(v)$ can be built in $O(|V(G)|)$ time. 
    If $|I^*(v)| = t$, then the cardinality of the power set of $I^*(v)$ is $2^t$. The set $P(v)$ is the power set of $I^*(v)$, excluding the empty set and the elements of $IC$.
    Note that, one can build the set $IC$ in $O(|V(G)|)$ time, check if a set of $IC$ is a subset of $I^*(v)$ also in $O(|V(G)|)$ time. 
    Let $s$ be the number of sets in $IC$ that are subsets of $I^*(v)$. Therefore, we will have $|P(v)| = 2^t - s -1$, calculated in $O(|V(G)|)$ time.

    Next consider the graph $G^*$ obtained from $G$ by deleting the vertices 
    $v \in V(G) \setminus C$ satisfying 
    $|P(v)| > |V(G)|$. Notice that, for the vertices $u \in V(G^*) \setminus C$, it is possible to build $P(u)$ in $O(|V(G)|)$ time. Since there can be at most $|V(G)|$ many such vertices, it is possible to build an auxiliary graph $H$ in $O(|V(G)|)$ time as described in the following.

The graph $H$ is a bipartite graph with the vertices of $V(G^*) \setminus C$ forming one of the partite sets, say $A$. The other partite set, say $B$, is formed by the union of the sets from $P(u)$, 
where $u$ varies in $V(G^*) \setminus C$. 
Furthermore, include $k$ more vertices in the partite set $B$, namely, $z_1, z_2, \cdots, z_k$. If a vertex $S$ of $B$ satisfies $S \in P(u)$ for some $u \in A$, then put an edge between $u$ and $S$. Also, if $u \in A$ is a neighbor of $a_i$ or $b_i$ in $G$ for some $i \in \{1,2,\ldots,k\}$, then add an edge between $u$ and $z_i$. This way our 
auxiliary graph $H$ is constructed. Notice that, $|V(H)| = O(|V(G)|^2)$.

Next try to find a matching in $H$ that saturates $A$. If we cannot find one, that means $C$ is not an oriented identifying code of $G^*$, and hence of $G$. If we can find such a matching, then we will show how to find a suitable orientation $\overrightarrow{G^*}$ of $G^*$ for which $C$ is an identifying code of $\overrightarrow{G^*}$. Later we will extend this orientation  to obtain an 
orientation $\overrightarrow{G}$ of $G$ such that $C$ is an identifying code of 
$\overrightarrow{G}$. 

Therefore, the next main objective is 
to understand how to orient $G^*$ if a suitable matching of $H$ is found. Suppose a matching of $H$ which saturates $A$ is obtained. If any vertex $u \in A$ is matched with a vertex $S \in B \setminus \{z_1, z_2, \cdots, z_k\}$, then orient the edges connecting $u$ to the vertices of $C$ in such a way that exactly the vertices of 
$S$ becomes out-neighbors of $u$. If any vertex $u \in A$ is matched with $z_i$, for some $i \in \{1,2, \cdots, k\}$, 
then there can be two scenarios. 

Firstly, if $u$ is adjacent to $a_i$, then orient the edge $e_i$ from $a_i$ to $b_i$. Also, orient  the edges between $u$ and the vertices of $C$ in such a way that only $a_i$ is the out-neighbor of $u$ in $C$.  

Secondly, if $u$ is not adjacent to $a_i$, and hence adjacent to $b_i$, then orient the edge $e_i$ from $b_i$ to $a_i$. Also,
orient the edges between $u$ and the vertices of $C$ in such a way that only $b_i$ is the out-neighbor of $u$ in $C$.  

All other edges of $G^*$ can be oriented arbitrarily. Let this so-obtained orientation be $\overrightarrow{G^*}$. 
Now add the vertices of $V(G) \setminus V(G^*)$. Also, since any $v \in V(G) \setminus V(G^*)$ satisfies $|P(v)| > |V(G)|$, it is possible to assign a subset $S$ of $P(v)$ to $v$  which is different from any identifier of a vertex in $G^*$, or any set earlier assign to a vertex from 
$V(G) \setminus V(G^*)$ using a greedy algorithm. After assigning such distinct sets, orient the edges between $V(G) \setminus V(G^*)$ and $C$ in such a way that for any vertex $v$, its out-neighbors in $C$ are exactly the vertices from the set $S$ assigned to it. The remaining edges can be oriented arbitrarily. Let the so-obtained orientation of $G$ be $\overrightarrow{G}$. 

The proof is completed by noticing $C$ is an identifying code of $\overrightarrow{G}$. 
\end{proof}

A Boolean formula is in \textit{$(3,4)$-SAT} if it is in conjunctive normal form, every clause contains exactly three literals, and each variable appears in at most four clauses. 
The corresponding decision problem asks whether there exists a truth assignment that satisfies all clauses of the formula. Tovey~\cite{tovey1984simplified} proved that \textsc{$(3,4)$-SAT} is NP-complete.

For every fixed $d \geq 0$, the problem $\mathcal{F}_d$-Id Code is in NP. To see this, observe that an orientation of $G$ serves as a polynomial-size certificate. Given an orientation, one can compute $N^+[v] \cap C$ for each $v \in V(G)$ and verify in polynomial time that the resulting identifiers are nonempty and pairwise distinct.
We are now ready to prove the NP-completeness of the \textsc{$\mathcal{F}_2$-Id Code} problem.

\begin{theorem}\label{thm 2-regular NP-c}
Let $\mathcal{F}_2$ be the family of all $2$-regular graphs. Then the \textsc{$\mathcal{F}_2$-Id Code} problem is NP-complete.
\end{theorem}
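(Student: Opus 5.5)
Membership in NP was observed just before the statement, so the plan is to reduce \textsc{$(3,4)$-SAT} (NP-complete by Tovey~\cite{tovey1984simplified}) to \textsc{$\mathcal{F}_2$-Id Code}. Given a formula $\phi$ with variables $x_1,\dots,x_n$ and clauses $c_1,\dots,c_m$, we will build a graph $G$ and a code $C$ with $G[C]$ a disjoint union of cycles (the introduction indicates $24$-cycles). There will be one cycle $Q_i$ for each variable $x_i$. A cycle of length $\ell$ can be oriented in $2^\ell$ ways, but we will attach forcing non-codewords so that only two orientation patterns survive, and these two patterns will encode $x_i=\mathrm{true}$ and $x_i=\mathrm{false}$. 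Each codeword $c$ of a cycle has $I(c)=\{c\}\cup(\text{out-neighbours of } c \text{ on the cycle})$, so $I(c)$ is one of $\{c\}$, $\{c,c^-\}$, $\{c,c^+\}$ or $\{c,c^-,c^+\}$. The key rigidity tool will be a non-codeword pendant-type vertex $w$ adjacent to exactly one codeword $c$. Its identifier must be $\{c\}$, so $c$ cannot have $I(c)=\{c\}$ and hence $c$ is not a sink on the cycle. In the same way, a non-codeword $w$ adjacent to exactly two consecutive codewords $c,c'$ has identifier in $\{\{c\},\{c'\},\{c,c'\}\}$, and this excludes certain local configurations.

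First I would design the variable gadget. On $Q_i$ (length $24$) I will place such forcing vertices at chosen positions so that every valid orientation of $Q_i$ is one of two patterns. I would try the two directed orientations (clockwise and anticlockwise) first: if no codeword on the cycle may be a sink, then the cycle has no sink, hence no source, and so it is directed. In a directed cycle each identifier is $\{c,c^+\}$, and these are all distinct, so the cycle alone is fine. Pendant forcing vertices on every codeword would collide, since two pendants on the same $c$ both get $\{c\}$. So each codeword gets at most one pendant, and its own identifier $\{c,c^{\pm}\}$ differs from $\{c\}$. In a clockwise orientation the pairs $\{c_j,c_{j+1}\}$ are used, and in an anticlockwise one the pairs $\{c_j,c_{j-1}\}$, which is the same family of pairs. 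To distinguish the two states we therefore need non-codewords whose available identifiers interact with which vertex ``owns'' each pair. Concretely, a non-codeword $u$ adjacent to $c_j,c_{j+1}$ can take $\{c_j,c_{j+1}\}$ only if neither of them already has that identifier, which is impossible in a directed cycle. So I would instead use vertices adjacent to three codewords, or to codewords from two different cycles, which gives identifiers not realized on any cycle. The length $24$ presumably leaves room for four literal occurrences per variable, each with its own well-separated block of about $6$ vertices, avoiding interference.

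Second, the clause gadget. For clause $c_k$ I will add a non-codeword $u_k$ adjacent to one designated codeword in the relevant block of each of its three literal variables, together with auxiliary non-codewords that use up all the ``safe'' subsets of $N(u_k)\cap C$ except those that are free only when the corresponding literal is true, that is, when the orientation of that variable cycle does not already realize the set as the identifier of a codeword. Then $u_k$ gets a distinct identifier if and only if some literal is satisfied. Vertices outside the gadgets will be given many codeword neighbours, in the spirit of the $|P(v)|>|V(G)|$ argument of Theorem~\ref{thm poly}, so that they are always satisfiable. Equivalently, we simply avoid having such vertices.

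The main obstacle will be the exact local accounting: every non-codeword's candidate identifiers must be enumerated and shown to be either forced or unable to collide across gadgets. The construction has to do this while keeping $G[C]$ exactly $2$-regular, so no extra codewords are allowed and all forcing must be done by non-codewords. I would first fix the block design on a single cycle and verify the two-state lemma by case analysis. After that, soundness (a satisfying assignment gives an orientation) and completeness (an orientation gives an assignment, read off from the direction of each $Q_i$) follow block by block.
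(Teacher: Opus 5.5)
\paragraph{Gap: no working variable gadget.} Your overall plan (reduce from $(3,4)$-SAT, one cycle of codewords per variable, forcing non-codewords, and clause vertices that get a free identifier only if some literal is true) matches the paper's strategy. The problem is that the variable gadget, which is the heart of the reduction, is never constructed, and the one concrete candidate you describe cannot work.

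If every codeword of $Q_i$ carries a pendant, the cycle is forced to be directed. But then the family of identifiers on $Q_i$ is $\{\{c_j,c_{j+1}\}\}_j$ in both directions, and the pendants use all the singletons regardless. A codeword's identifier depends only on the arcs of its own cycle. So reversing a directed cycle turns any valid orientation of $G$ into another valid one. No other vertex can ever detect the direction, so reading $x_i$ off the direction of $Q_i$ fails for every possible clause gadget.

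Your suggested remedy does not help. Identifiers using three codewords, or codewords from two different cycles, are never realized on any cycle in either state, so whether they are available does not depend on the state. The same issue affects your clause gadget. With one designated codeword $p$ per literal, the only state-dependent candidate set is $\{p\}$. That requires $p$ to be a sink in exactly one of the two states and $\{p\}$ not to be used otherwise, which is impossible on a directed cycle since it has no sinks. Pendants can only forbid sinks; they never create them.

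\paragraph{What the paper does instead.} You need a two-state pattern whose states realize \emph{different} identifier families, together with a rigidity argument for it. The paper uses the alternating source/sink orientation and forces it by global counting rather than local pendant constraints.
\begin{itemize}
\item Attach a non-codeword $b_{i,j}$ to each edge $a_{i,j}a_{i,j+1}$, and a second one $b'_{i,j}$ to each such edge with $j$ even.
\item These $36$ vertices must take distinct singletons or doubletons from the $48$ available. That leaves at most $12$ such sets for the $24$ cycle vertices.
\item Hence at least $12$ cycle vertices have a triple as identifier, i.e.\ are sources. A cycle has as many sources as sinks, so there are exactly $12$ of each, and they alternate.
\item Now every singleton and doubleton is always used. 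The state is encoded only by which triples $\{a_{i,j-1},a_{i,j},a_{i,j+1}\}$ are used.
\end{itemize}

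Accordingly, the clause vertex $z_q$ is joined to a whole block of three consecutive cycle vertices on each of its three variable cycles, and $493$ false twins of $z_q$ are added. Of the $2^9-1=511$ candidate identifiers, $15$ are always used, and one more is used for each block whose center is a source. So the $494$ vertices of the gadget can get distinct identifiers if and only if some block center is a sink, that is, some literal is true.
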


\begin{proof}
    Given a $(3,4)$-SAT formula $\phi$, we will now describe the construction of a graph $G_{\phi}$, and a vertex subset $C \subseteq V(G_{\phi})$, where $G_{\phi}[C] \in \mathcal{F}_2$. Furthermore, we will show that 
    $\phi$ is satisfiable if and only if 
    $C$ is an oriented identifying code of $G_{\phi}$. 

Suppose $\phi$ has $r$ variables, namely, 
$x_1, x_2, \cdots, x_r$ and 
$\ell$ clauses, namely, 
$C_1, C_2, \cdots, C_{\ell}$. 

\medskip

\noindent \textbf{Construction of the variable gadgets:} For each variable $x_i$, construct a $24$-cycle of the form 
$$A_i = a_{i,1} a_{i,2}  \cdots a_{i,24}a_{i,1},$$ 
where $i \in \{1,2, \ldots, r\}$. The vertices of these cycles are part of the set $C$. 
Next, for each edge $a_{i,j}a_{i,j+1}$ (the second coordinates of the indices are considered modulo $24$)
of the cycle $A_i$ add one vertex $b_{i,j}$  adjacent to the vertices 
$a_{i,j}$ and $a_{i,j+1}$. 
After that, for even values of $j$, add 
another new vertex $b'_{i,j}$ to the vertices $a_{i,j}$ and $a_{i,j+1}$. The vertices of the form $b_{i,j}$ and $b'_{i,j}$ are part of $V(G) \setminus C$.

This completes the construction of the variable gadget. 

\medskip

\noindent \textbf{Construction of the clause gadgets:}
The vertices of the cycle $A_{i}$ are partitioned into $8$ sets called blocks, denoted by, $B_{i,1}, B_{i,2}, \cdots, B_{i, 8}$, where $B_{i,j} = \{a_{i, 3j-2}, a_{i,3j-1}, a_{i,3j}\}$, where $j \in \{1, 2, \ldots, 8\}$. Moreover, the vertex $a_{i, 3j-1}$ is called the center of the block $B_{i,j}$. 
A block $B_{i,j}$ is odd (resp., even) if $j$ is odd (resp., even). Let us keep in mind that the odd blocks will correspond to the positive occurrence of a variable, and the even blocks will correspond to the negative occurrence of a variable.

For a clause $C_q$, add a vertex $z_q$ in the set $V(G) \setminus C$, where $q \in \{1, 2, \ldots, \ell\}$. 
Suppose the three variables that occur  
in $C_q$ are $x_{i_1}, x_{i_2}, x_{i_3}$. 
Then we will make $z_q$ adjacent to the vertices of exactly one block from each of $A_{i_1}, A_{i_2}, A_{i_3}$. The choice of the blocks will depend on whether $x_{i_1}$ (resp., $x_{i_2}, x_{i_3}$) appears as a positive or a negative in $C_q$. 
If $x_{i_1}$ (resp., $x_{i_2}, x_{i_3}$) appears as a positive in $C_q$, then 
make $z_q$ adjacent to the vertices of an odd block of $A_{i_1}$  
(resp., $A_{i_2}, A_{i_3}$). 
If $x_{i_1}$ (resp., $x_{i_2}, x_{i_3}$) appears as a negative in $C_q$, then 
make $z_q$ adjacent to the vertices of an even block of $A_{i_1}$  
(resp., $A_{i_2}, A_{i_3}$).

During this entire procedure, make sure to use separate blocks for different $z_q$s, where $q \in \{1, 2, \ldots, \ell\}$. This is possible as each variable can occur in maximum four clauses, and each $A_i$ has exactly four odd blocks, and exactly four even blocks.

Recall that, a false-twin of a vertex $u$ is another vertex $v$ such that $N(u) = N(v)$, and $u$ and $v$ are non-adjacent. 
Finally, add exactly $493$ false-twins of  $z_q$ in the set $V(G) \setminus C$, where $q \in \{1, 2, \ldots, \ell\}$. Name these 
false twins as $z_{q, 1}, z_{q, 2}, \cdots, z_{q, 493}$. 
Furthermore, let $Z_q = \{z_q, z_{q, 1}, z_{q, 2}, \cdots, z_{q, 493}\}$. The vertex set $Z_q$ is the gadget for the 
clause $C_q$. 

This completes the construction of the clause gadget.

\medskip

\noindent\textit{Claim 1.}\quad  
Suppose  $\overrightarrow{G_{\phi}}$ is an 
orientation of $G_{\phi}$ for which  $C$ 
is an identifying code. Let $\overrightarrow{G_{\phi}}[A_i]$ denote the oriented cycle induced by the vertices of $A_i$, for all $i \in \{1,2, \ldots, r\}$. 
Then the following are true.

\begin{enumerate}[(a)]
    \item The vertices of $\overrightarrow{G_{\phi}}[A_i]$ are either source or sink. 

    \item The oriented cycle $\overrightarrow{G_{\phi}}[A_i]$ has exactly $12$ sources, and $12$ sinks.

    \item The oriented cycle $\overrightarrow{G_{\phi}}[A_i]$ can have exactly one out of the 
    two following orientations: 
    (i) all vertices with odd indices in its second coordinate are sources, and all vertices with even indices in its second coordinate are sinks; 
    (ii) all vertices with odd indices in its second coordinate are sinks, and all vertices with even indices in its second coordinate are sources.
\end{enumerate}

\noindent \textit{Proof of Claim.}
Suppose $\overrightarrow{G_{\phi}}$ is an 
orientation of $G_{\phi}$ for which  $C$ 
is an identifying code.

An arbitrary vertex $a_{i,j}$ of $A_i$ 
have exactly four options of identifiers, namely, 
$\{a_{i, j}\}$, 
$\{a_{i,j-1}, a_{i,j}\}$,
$\{a_{i,j}, a_{i, j+1}\}$,
$\{a_{i,j-1}, a_{i,j}, a_{i, j+1}\}$. 
Let us take the union of these four identifier options for
$a_{i,j}$ while varying $j$ from $1$ to $24$. Notice that, the adjacent vertices will have a couple of common options, and overall, the union of these options are 
$72$ (that is, $24$ singleton sets, $24$ doubleton sets, and $24$ three element sets). Let $S_i$ denote set of all such $72$ 
sets. Moreover, let $S_{i,1}$ denote the set of all singleton sets in $S_i$, $S_{i,2}$ denote the set of all doubleton sets in $S_i$, and $S_{i,3}$ denote the set of all three element sets in $S_i$.  

Notice that there are $36$ vertices of the form $b_{i,j}$ and $b'_{i,j}$. These vertices must have identifiers from $S_i$ 
since they are adjacent to the end points of the edges of $A_i$. 
Moreover, since these vertices are adjacent to exactly two vertices of  $C$ (both from $A_i$), their identifiers must be either a singleton or a doubleton set from $S$. 

That means, $36$ among the $48$ sets in $S_{i,1} \cup S_{i,2}$ must be used as identifiers of the vertices of the form $b_{i,j}$ and $b'_{i,j}$. That means, we can use a maximum of $(48 -36) = 12$ from $S_{i,1} \cup S_{i,2}$ sets 
as identifiers for the $24$ vertices of $A_i$. 
That also implies that, we need to use at least $12$ sets from $S_{i,3}$ as identifiers for the $24$ vertices of $A_i$. Notice that, if a set from $S_{i,3}$ is used as an identifier of a vertex $a_{i,j}$, then $a_{i,j}$ must be a source in $\overrightarrow{G_{\phi}}[A_i]$. 
That means, $\overrightarrow{G_{\phi}}[A_i]$ has at least $12$ sources. 

Observe that, the number of sources and sinks in $\overrightarrow{G_{\phi}}[A_i]$ must be the same. The reason is as follows. Let $p$ (resp., $p'$) 
be the number of sources (resp., sinks) in $\overrightarrow{G_{\phi}}[A_i]$. 
That means, $\overrightarrow{G_{\phi}}[A_i]$
has exactly $p$ vertices having out-degree 
equal to $2$ and in-degree equal to $0$;
exactly $p'$ vertices having out-degree 
equal to $0$ and in-degree equal to $2$;
and exactly $(24 - p - p')$ vertices having out-degree 
equal to $1$ and in-degree equal to $1$. 
Since the sum of out-degrees and in-degrees in any directed graph is equal due to the Handshaking Lemma for directed graphs, we have 
$$2p+(24-p-p') = 2p' + (24-p-p') \implies p = p'.$$

That means, the maximum number of sources 
we can have is $12$. Therefore, in our scenario, 
$\overrightarrow{G_{\phi}}[A_i]$ must have exactly $12$ sources and $12$ sinks.

This proves parts (a) and (b) of the claim. 
The part (c) of the claim follows from the fact that every vertex of the oriented $24$-cycle is a source or a sink, and hence sources and sinks must alternate around the cycle.
\hfill $\circ$

\medskip

\noindent \textbf{The equivalence (forward direction):}  Suppose $\phi$ is satisfiable. That means, the variable have an truth or false assignment so that every clause contains at least one true literal. Let us find an orientation of $G_{\phi}$ where $C$ is an identifying code. 

If $x_i$ is assigned truth, then orient $A_i$ is such a way that all the center of its odd blocks are sinks, and the center of all its even blocks are sources.
Similarly, if $x_i$ is assigned false, then orient $A_i$ is such a way that all the center of its odd blocks are sources, and the center of all its even blocks are sinks. These are 
 valid orientations since all the centers of the odd blocks (resp., even blocks) have indices of the same parity, and due to Claim~1(c). 

Next recall the definitions of $S_i, S_{i,1}, S_{i,2}, S_{i,3}$ from the proof of Claim~1. Orient the edges incident to the vertices corresponding to $b_{i,j}$s as sources. That means, the identifiers of $b_{i,j}$s will exhaust the sets belonging to $S_{i,2}$. 
Notice that, $b'_{i,j}$s have two neighbors from $A_i$, one with an odd index another with an even index. 

If the vertices having odd (resp., even) indices in $A_i$ are sinks, then use the singleton sets from $S_{i,1}$ 
containing vertices with even (resp., odd) indices as identifiers of the vertices of the type $b'_{i,j}$, and orient the edges incident to them accordingly. This is possible since $b'_{i,j}$s have distinct sets of neighbors. 
This means, all identifiers from the set 
$S_{i,1} \cup S_{i,2}$ has been exhausted. 

Since different clause gadgets use distinct blocks, identifiers assigned to
vertices of different clause gadgets are automatically distinct.
Now it remains to orient the edges 
incident to the vertices of $Z_q$, for all $q \in \{1,2,\ldots, \ell\}$. Let us fix some $q \in \{1,2,\ldots, \ell\}$.
Suppose that the vertices of $Z_q$ is adjacent to vertices of three different blocks, namely, $B_1$, $B_2$, $B_3$ belonging to three different $24$-cycles 
corresponding to the variable gadgets. 
Thus, potentially, there are $2^9-1 = 511$ different identifiers that can be assigned to them. However, three singleton 
and two doubleton subsets of the vertices of $B_1$ are already used as identifiers
(in the $24$-cycle containing $B_1$). 
 That means, $5$ potential identifiers for the vertices of $Z_q$, which are subsets of $B_1$ have already been used. Similarly, $5$ potential identifiers for the vertices of $Z_q$, which are subsets of $B_2$ (resp., $B_3$) have already been used. That means, the potential identifiers for the vertices of $Z_q$ is reduced to $(511 - 15) = 496$. 

 Furthermore, note that, if the center of $B_1$ (resp., $B_2$, $B_3$) is a source, then the unique three element subset of the vertices of $B_1$ (resp., $B_2$, $B_3$) has also been used as an identifier. Since the clause $C_q$ has at least one true literal, not all centers of $B_1, B_2, B_3$ can be sources. That means, we will be left with at least $494$  potential identifiers of the vertices of $Z_q$. Thus, assign these set of $494$ identifiers distinctly to the $494$ vertices of $Z_q$, and orient the edges incident to them accordingly. 
 
 The so-obtained orientation of $G_{\phi}$ is our desired $\overrightarrow{G_{\phi}}$. The way the orientation 
 $\overrightarrow{G_{\phi}}$ is obtained, 
 one can observe that $C$ is an identifying code of $\overrightarrow{G_{\phi}}$.

\medskip

\noindent \textbf{The equivalence (backward direction):} Suppose 
there exists an orientation $\overrightarrow{G_{\phi}}$ of $G_{\phi}$ such that $C$ is an identifying code of 
$\overrightarrow{G_{\phi}}$. 
Our goal is to provide a truth assignment to the variables $x_1, x_2,\ldots, x_r$ which satisfies the formula $\phi$.

According to Claim~1(c), the $24$-cycles of the form $A_i$ can have one of the two following possible orientations: 
(i) all vertices with odd indices in its second coordinate are sources, and all vertices with even indices in its second coordinate are sinks; 
(ii) all vertices with odd indices in its second coordinate are sinks, and all vertices with even indices in its second coordinate are sources.
In case, if the orientation of $\overrightarrow{G_{\phi}}[A_i]$ is of type (i), then assign true to the variable $x_i$. If the orientation of $\overrightarrow{G_{\phi}}[A_i]$ is of type (ii), then assign false to the variable $x_i$. 
In other words, the orientation $\overrightarrow{G_\phi}[A_i]$ corresponds to the assignment of truth (resp., false) to the variable $x_i$ if the centers of the odd (resp., even) blocks are sinks. 
We will show that this assignment satisfies $\phi$. To prove that it is enough to show that each clause contains at least one true literal.

Recall the definitions of $S_i, S_{i,1}, S_{i,2}, S_{i,3}$ from the proof of Claim~1.
Observe that, irrespective of the orientation of $\overrightarrow{G_{\phi}}$, the sets from $S_{i,1} \cup S_{i,2}$ are already used as identifiers of the vertices of $A_i$ and vertices of the form $b_{ij}$ and $b'_{ij}$.
This follows from the fact that due to Claim 1, there are exactly $12$ sinks in $A_i$, which use $12$ distinct
singleton identifiers. Moreover, the $36$ vertices of the form $b_{i,j}$ and
$b'_{i,j}$ use $36$ further distinct identifiers from $S_{i,1}\cup S_{i,2}$.
Since $|S_{i,1}\cup S_{i,2}|=24+24=48$,
all identifiers in $S_{i,1}\cup S_{i,2}$ have already been used.
Let us now count the options of identifiers of the vertices of $Z_q$ for some $q \in \{1,2, \ldots, \ell\}$.
The vertices of $Z_q$ are false-twins, and thus are independent vertices having the exact same set of neighbors. They have $9$ neighbors, and thus a total of $2^9-1 = 511$ options for assigning distinct identifiers.

Since the identifiers belonging to sets of the type 
$S_{i,1} \cup S_{i,2}$ are already used, and since the vertices of $Z_q$ are adjacent to vertices of exactly three distinct blocks, $(5 \times 3) = 15$ more options are unavailable for the vertices of $Z_q$ that we had counted earlier. 
That leaves us with $(511-15) = 496$ options. 

Let $B$ be a block whose vertices are adjacent to the vertices of $Z_q$. 
If the center of $B$ is a source, then 
a three element set, among the $496$ options for the identifiers of the vertices of $Z_q$ counted above,   is also eliminated. Since we know that all vertices of $Z_q$ has distinct identifiers, that is, collectively they have $494$ distinct identifiers, 
not all three centers of the blocks whose vertices are adjacent to the vertices of $Z_q$ can be sources. This corresponds to  every clause containing a true literal.

\medskip

\noindent \textbf{Counting the number of vertices of $G_{\phi}$:}
Corresponding to each variable $x_i$, there is a $24$-cycle $A_i$, $24$ vertices of the form $b_{i,j}$, $12$ vertices of the form $b'_{i,j}$, for all $i \in \{1,2,\ldots,r\}$. That means, totally, there are $(24+24+12)r = 60r$ vertices
used to construct the variable gadgets.

Corresponding to each clause $C_q$, there are $494$ vertices from $Z_q$
for all $q \in \{1,2,\ldots,\ell\}$. That means, totally, there are $494\ell$ vertices  that are used to construct the clause gadgets. 

Thus, the graph $G_{\phi}$ contains $60r+494\ell$ vertices, which is linear in terms of the input formula size. 

\medskip

\noindent \textbf{Concluding the proof:}
For any $(3,4)$-SAT formula $\phi$ we constructed a graph $G_{\phi}$ (along with designating vertex subset $C$) having 
number of vertices linear in the size of $\phi$ satisfying the following: 
the formula $\phi$ is satisfiable if and only if $C$ is an oriented identifying code of $G_{\phi}$. This completes the proof as \textsc{$(3,4)$-SAT} NP-complete.
\end{proof}

The proof of the next result uses the 
above theorem. To be precise, we will use the following  corollary of it.

\begin{corollary}\label{cor 2-regular bipartite}
  Let $\mathcal{B}_2$ be the family of all $2$-regular bipartite graphs having even number of vertices. Then the \textsc{$\mathcal{B}_2$-Id Code} problem  is NP-complete.
\end{corollary}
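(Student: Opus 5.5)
The plan is to observe that the reduction in the proof of Theorem~\ref{thm 2-regular NP-c} already produces instances lying in $\mathcal{B}_2$, so the same reduction, read again, proves the corollary. Membership in NP is immediate, exactly as argued before Theorem~\ref{thm 2-regular NP-c}: an orientation of $G$ is a polynomial-size certificate, and the identifiers $N^+[v]\cap C$ can be checked for non-emptiness and pairwise distinctness in polynomial time. So only NP-hardness needs an argument.

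For hardness I would take a $(3,4)$-SAT formula $\phi$ and the pair $(G_{\phi}, C)$ built in the proof of Theorem~\ref{thm 2-regular NP-c}. The first step is to identify $G_{\phi}[C]$. By construction $C$ consists exactly of the vertices of the cycles $A_1, \ldots, A_r$. Every other vertex ($b_{i,j}$, $b'_{i,j}$, $z_q$, $z_{q,t}$) lies in $V(G_{\phi})\setminus C$, and no edges are added between distinct cycles. Hence $G_{\phi}[C]$ is the disjoint union of $r$ copies of $C_{24}$. Each $24$-cycle is an even cycle, so it is bipartite, and a disjoint union of bipartite graphs is bipartite. The graph is $2$-regular, and it has $24r$ vertices, which is even. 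Therefore $G_{\phi}[C]\in\mathcal{B}_2$, and $(G_{\phi},C)$ is a valid instance of \textsc{$\mathcal{B}_2$-Id Code}.

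The second step is that Theorem~\ref{thm 2-regular NP-c} already shows $\phi$ is satisfiable if and only if $C$ is an oriented identifying code of $G_{\phi}$. That equivalence does not depend on which family the instance is viewed in. The construction is also polynomial, since it has $60r+494\ell$ vertices. Composing with Tovey's NP-completeness of \textsc{$(3,4)$-SAT} then gives NP-hardness of \textsc{$\mathcal{B}_2$-Id Code}.

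The only point needing care is the first step: confirming that no edge of $G_{\phi}$ joins two codewords other than the cycle edges of the $A_i$. A chord inside a cycle, or an edge between two cycles, would break $2$-regularity. Checking the construction, every added vertex is a non-codeword, so such edges do not occur. I therefore expect no real obstacle; the corollary is a direct restatement of the earlier reduction.
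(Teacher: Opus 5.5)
Your proposal is correct and is exactly the paper's argument. The instances $(G_{\phi},C)$ from the proof of Theorem~\ref{thm 2-regular NP-c} already have $G_{\phi}[C]$ equal to a disjoint union of $24$-cycles, so they lie in $\mathcal{B}_2$, and the same reduction gives hardness; your explicit checks (NP membership, that every added vertex is a non-codeword, and the $24r$ vertex count) just spell out what the paper's one-line proof leaves implicit.
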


\begin{proof}
    Follows directly from Theorem~\ref{thm 2-regular NP-c} since $G_{\phi}[C]$ from the proof is a disjoint union of $24$-cycles, which are in particular $2$-regular bipartite graphs.  
\end{proof}

\begin{theorem}\label{thm d-regular NP-c}
Let $\mathcal{F}_d$ be the family of all $d$-regular graphs, for $d \geq 3$. Then the \textsc{$\mathcal{F}_d$-Id Code} problem  is NP-complete.
\end{theorem}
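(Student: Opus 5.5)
We reduce from the \textsc{$\mathcal{B}_2$-Id Code} problem, which is NP-complete by Corollary~\ref{cor 2-regular bipartite}. Given an instance $(G,C)$ with $G[C]$ a disjoint union of even cycles (indeed $24$-cycles), we build $(G',C')$ with $G'[C']$ $d$-regular. The plan is to take $d-2$ additional disjoint copies $G_1,\dots,G_{d-2}$ of $G$ (or of $G[C]$ with the same structure) and join the code parts so that every codeword gains exactly $d-2$ new code neighbours. Concretely, each cycle of $G[C]$ is bipartite with sides $X,Y$ of equal size. We take $d-1$ copies $C^{(0)},\dots,C^{(d-2)}$ of $C$. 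For each pair of distinct copies we add a perfect matching between corresponding vertices, arranged so that each codeword receives exactly one new neighbour from each other copy. This gives degree $2+(d-2)=d$. The bipartiteness and even length should let us do this while keeping the resulting graph triangle-free between copies (e.g.\ join $X^{(s)}$ to $Y^{(t)}$ via a fixed matching) if needed. Each copy $C^{(s)}$ keeps its own non-codewords $b,b',z$ from the copy $G^{(s)}$ of $G$.

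The key steps in order are as follows. First, we fix the construction and verify that $G'[C']$ is $d$-regular. Second, for the forward direction, we take an orientation of $G$ for which $C$ is an identifying code, copy it on each $G^{(s)}$, and orient the new inter-copy edges. The identifier of a vertex in copy $s$ then lies in copy $s$ plus possibly one vertex of another copy. Identifiers of vertices in different copies are distinct because they contain different ``home'' codewords, or, for non-codewords, subsets of different copies. Within a copy, adding out-neighbours from other copies to codewords must not create collisions. The simplest choice is to orient every inter-copy edge from copy $s$ to copy $t$ for $s<t$ in a fixed way, and check that the identifier restricted to the home copy still separates vertices. The danger is a codeword whose identifier becomes equal to a non-codeword's identifier elsewhere, which cannot happen since non-codewords have identifiers inside one copy only while a codeword $u$ of copy $s$ has $u$ in its identifier. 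We must also handle codewords of the last copy having no extra out-neighbours.

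Third, for the backward direction, given an orientation of $G'$ with $C'$ identifying, we restrict to copy $0$ and argue that $C^{(0)}$ is an identifying code of the restricted orientation of $G^{(0)}\cong G$. Non-codewords of $G^{(0)}$ only see $C^{(0)}$, so their identifiers are unchanged. A codeword $u$ has identifier $I'(u)\cap C^{(0)}$ after restriction, which contains $u$ and is therefore nonempty. The issue is distinctness: two vertices of copy $0$ could differ only in their out-neighbours in other copies. This is the main obstacle. For non-codeword versus codeword pairs, the codeword's restricted identifier still contains itself. For non-codeword pairs there is no change. For codeword pairs $u\neq v$, the identifiers $I(u)\ni u$ and $I(v)\ni v$ could coincide after restriction only if $u\in I(v)$ and $v\in I(u)$, i.e.\ both arcs $uv$ and $vu$, which is impossible in an oriented graph. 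Hence the restriction works, and the obstacle dissolves. We must still double-check that the gadget reasoning is not needed, because we reduce from the decision problem itself rather than re-proving Claim~1.

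Finally, the construction has size $O(d\,|V(G)|)$, which is polynomial for fixed $d$, and membership in NP was already observed. This completes the plan.
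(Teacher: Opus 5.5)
\paragraph{The backward direction fails.} Your forward direction is fine: two codewords can never share an identifier in an oriented graph, so any orientation of the inter-copy edges works. The backward direction fails, though, and the failure is in the construction, not only in the write-up.

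The step that breaks is the codeword-versus-non-codeword case. That $I'(u)\cap C^{(0)}$ contains $u$ does not separate $u$ from a non-codeword $w$ with $w\to u$. If $I'(w)=\{u\}$ and $u$ is a sink of its cycle with an arc to $u^{(1)}$, then $I'(u)=\{u,u^{(1)}\}\neq I'(w)$ in $G'$, yet both restrict to $\{u\}$. More generally, a codeword with an out-neighbour in another copy can never collide with anything. So the inter-copy edges give codewords an escape route that removes precisely the counting constraints that make $C$ fail in $G$.

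\paragraph{A counterexample for $d=3$.} Let $G[C]$ be the $4$-cycle $a_1a_2a_3a_4$. For each $j$ (indices mod $4$), add two non-adjacent non-codewords adjacent exactly to $a_j$ and $a_{j+1}$. These eight vertices must use all eight sets $\{a_j\}$ and $\{a_j,a_{j+1}\}$. That forces every $a_j$ to be a source of the cycle, which is impossible, so $C$ is not an oriented identifying code of $G$.

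In your $G'$, however, do the following. Make $a^{(0)}_1,a^{(0)}_3$ and $a^{(1)}_2,a^{(1)}_4$ the sources of their cycles. Orient $a^{(0)}_2\to a^{(1)}_2$, $a^{(0)}_4\to a^{(1)}_4$, $a^{(1)}_1\to a^{(0)}_1$ and $a^{(1)}_3\to a^{(0)}_3$. Let the non-codewords of each copy take the eight small sets of that copy. Every codeword identifier is then either a triple or meets the other copy, so $C'$ is identifying. Your $X^{(s)}$--$Y^{(t)}$ variant behaves the same way.

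The analogous split (odd-indexed codewords escaping in one copy, even-indexed ones in the other) makes $C'$ identifying on every $G_\phi$, satisfiable or not.

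For $d\ge 4$ it is worse still. The inter-copy edges form a graph of minimum degree $d-2\ge 2$, which can be oriented so that every codeword has an out-neighbour in another copy. Then $C'$ is identifying as soon as the non-codewords of $G$ alone can be given distinct identifiers.

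\paragraph{What is missing.} You need a mechanism that forces every new edge at an original codeword to be oriented into $C$, so that the identifiers of $C$ are literally unchanged. Joining copies of $C$ to each other cannot do this, since nothing prevents those edges from pointing outward.

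The paper achieves it with agents. An agent is a $K_{d+1}-e$ whose two pivots $y_1,y_2$ are joined to a pair $u_1,u_2\in C$, together with $2^{d+1}-2^{d-1}-1$ non-codewords. These non-codewords exhaust all non-empty subsets of $\{y_1\}\cup B$ and of $\{y_2\}\cup B$, where $B$ is the set of basic vertices. Since those subsets are all taken, every basic vertex must have both pivots as out-neighbours. Hence the identifier of $y_i$ is $\{y_i\}$ or $\{y_i,u_i\}$; the former is taken, so $y_i\to u_i$ is forced.

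The direction from $G$ to $G^*$ is then a direct assignment of these subsets. The direction from $G^*$ back to $G$ is immediate, because the identifiers of $C$ and of $V(G)\setminus C$ are the same in $G^*$ and in $G$.
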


\begin{proof}
    For this proof, we are going to use 
    Corollary~\ref{cor 2-regular bipartite}. 
    In particular, given any  graph $G$ and a vertex subset $C$ such that $G[C]$ is a $2$-regular bipartite graph, we will construct a graph $G^*$ with a vertex subset $C^*$ such that $G^*[C^*]$ is a $d$-regular graph. Moreover, we will show that 
    $C^*$ is an oriented identifying code of $G^*$ if and only if $C$ is an oriented identifying code of $G$.

First we are going to describe the construction of some structures which we call agents. These agents will be useful to construct $G^*$ from $G$. In particular, $G^*$ will be a supergraph of $G$ and $C^*$ will be a superset of $C$. The construction of the agents and how they connect to the vertices of $G$ are crucial aspects of the construction of $G^*$ from $G$.

\medskip

\noindent \textbf{Construction of agents:}  Start with a $K_{d+1} - e$, 
which denotes the graph obtained by deleting one edge from the complete graph on $(d+1)$ vertices. Suppose $y_1, y_2$ are the unique non-adjacent pairs of vertices of the $K_{d+1} - e$. 
The vertices $y_1, y_2$ of $K_{d+1} - e$ are called the pivots, and the rest of the vertices are called the basic vertices. 
Let $B$ denote the set of basic vertices. We first add $2^{d-1}-1$ vertices, each adjacent to every vertex in $B$. We then add two sets of $2^{d-1}$ new vertices, where the vertices in the first set are adjacent to $y_1$ and every vertex of $B$, while those in the second set are adjacent to $y_2$ and every vertex of $B$. We denote the resulting construction by $Y$ and refer to it as an \textit{agent}.

\medskip

\noindent \textbf{Construction of $G^*$:}  We know that $C$ has even number of vertices. 
Partition the vertices of $C$ into $\frac{|C|}{2}$ disjoint pairs. 
Let $\{u_1, u_2\}$ be such a pair. 
For each pair $\{u_1, u_2\}$ 
take $(d-2)$ agents and connect them to $u_1$ and $u_2$. 
The way we connect an agent $Y$ to a pair of vertices 
$\{u_1, u_2\}$ is through adding the edges $u_1y_1$ and $u_2y_2$. This completes the construction of the graph $G^*$.

\medskip

\noindent \textbf{Declaration of $C^*$:} 
The set $C^*$ is a union of the vertices of $C$ and the pivots and the base vertices of the agents used to construct $G^*$. The rest of the vertices of the agents belong to $V(G^*) \setminus C^*$.

\medskip

\noindent \textbf{The equivalence (forward direction):} Now we are going to prove that if
$C^*$ is an oriented identifying code of $G^*$, then $C$ is an oriented identifying code of $G$. 

First suppose $C^*$ is an oriented identifying code of $G^*$. 
That means, there exists an orientation $\overrightarrow{G^*}$ such that $C^*$ is an identifying code. Let $\overrightarrow{G}$ denote the graph induced by the vertices of $G$ from the oriented graph $\overrightarrow{G^*}$. 
Notice that the vertices of $G^*$ are a union of the vertices of $G$ and the vertices of the agents. 
The vertices of 
$V(G) \setminus C$ 
are not adjacent to any vertex of the agents. 
That means, the identifiers of the vertices of $V(G) \setminus C$ are the same in $\overrightarrow{G^*}$
and 
$\overrightarrow{G}$. 

Next, if we can show that the identifiers of the vertices of $C$ are the same in 
$\overrightarrow{G^*}$
and 
$\overrightarrow{G}$, then it will imply that $C$ is an identifying code of 
$\overrightarrow{G}$.

Let $Y$ be an agent with pivot vertices $y_1, y_2$ and a set $B$ of base vertices. Since $|B|=d-1$, the sets $\{y_1\}\cup B$ and $\{y_2\}\cup B$ each have $2^d$ subsets. Observe that the $2^{d-1}$ subsets of $B$ are common to both collections. Thus, the total number of distinct subsets of $\{y_1\}\cup B$ or $\{y_2\}\cup B$ is
$
2^d+2^d-2^{d-1}=2^{d+1}-2^{d-1}.
$
Excluding the empty set, there are $2^{d+1}-2^{d-1}-1$ possible non-empty identifiers. This is exactly the number of such vertices in $Y$, since
$
(2^{d-1}-1)+2^{d-1}+2^{d-1}
=2^{d+1}-2^{d-1}-1.
$
Therefore, these vertices must use all non-empty subsets of $\{y_1\}\cup B$ and $\{y_2\}\cup B$ as identifiers.


Since the vertices of $B$ must be assigned distinct identifiers, each of them must have both $y_1, y_2$ in their identifier sets. Moreover, that will force $y_1$ (resp., $y_2$) to have their only neighbor from the set $C$ to be in their identifier set. That means, all edges between the vertices of $C$ and the agents are oriented from the agents to $C$. Thus, none of the identifiers of $C$ contains any vertex from outside of $G$. 
Thus, 
the identifiers of the vertices of $C$ are the same in 
$\overrightarrow{G^*}$
and 
$\overrightarrow{G}$. Hence $C$ is an oriented identifying code of $G$.

\medskip

\noindent \textbf{The equivalence (backward direction):} Now we are going to prove that if
$C$ is an oriented identifying code of $G$, then $C^*$ is an oriented identifying code of $G^*$.

First suppose $C$ is an oriented identifying code of $G$. 
That means, there exists an orientation $\overrightarrow{G}$ such that $C$ is an identifying code. We will try to find an orientation $\overrightarrow{G^*}$ of $G^*$ such that $C^*$ is an identifying code of $\overrightarrow{G^*}$. 

To obtain $\overrightarrow{G^*}$, we will 
first retain the orientations of the oriented edges from $\overrightarrow{G}$. 
After that orient all the edges from the vertices of the agents to the vertices of $G$. For any agent, orient the edges from the vertices  of the base to their pivots. The orientations of the edges having base vertices as end points can be oriented arbitrarily.

Suppose $Y$ is an arbitrary agent with pivots $y_1,y_2$ and base vertices $B$. The $2^{d-1}-1$ vertices adjacent only to $B$ are assigned distinct identifiers from the non-empty subsets of $B$. The remaining $2^{d-1}$ vertices adjacent to $y_1$ and $B$ are assigned the identifiers containing $y_1$, while the $2^{d-1}$ vertices adjacent to $y_2$ and $B$ are assigned the identifiers containing $y_2$. We orient the corresponding edges so that these assigned sets become the identifiers of the respective vertices.


Observe that $C^*$ is indeed an identifying code of the so-obtained $\overrightarrow{G^*}$. 

\medskip

\noindent\textbf{Counting the number of vertices of $G^*$:}
Suppose $G$ had $n$ vertices, and $C$ had $k$ vertices. Notice that there are $2$ pivots, $(d-1)$ base vertices, and $2(2^d-1)$ other vertices in every agent. Also, there are a total of $\frac{(d-2)k}{2}$ agents. Hence the total number of vertices in $G^*$ is 
$$n+\frac{(d-2)k}{2}(2+(d-1)+2(2^d-1))  = O(n),$$
since $k \leq n$ and $d$ is a constant.

\medskip

\noindent \textbf{Concluding the proof:}
 This completes the proof as 
 the \textsc{$\mathcal{B}_2$-Id Code} problem is
 NP-complete (due to Corollary~\ref{cor 2-regular bipartite}), 
  $C^*$ is an oriented identifying code of $G^*$ if and only if $C$ is an oriented identifying code of $G$, 
  and $|V(G^*)| = O(|V(G)|)$.
\end{proof}

\noindent \textit{Proof of Theorem~\ref{thm main}.} Follows directly from Theorems~\ref{thm poly}, \ref{thm 2-regular NP-c}, and~\ref{thm d-regular NP-c}.

\bibliographystyle{abbrv}

\bibliography{biblio}
\end{document}